\documentclass[conference, rgb]{IEEEtran}
\IEEEoverridecommandlockouts

\usepackage{cite}
\usepackage{amsmath,amssymb,amsthm}
\usepackage{bbm}
\usepackage{algorithmic}
\usepackage{graphicx}
\usepackage{textcomp}
\usepackage{xcolor}
\usepackage{tikz}
\usepackage{pgfplots}
\pgfplotsset{compat=1.17}
\usepackage{hyperref}

\newtheorem{theorem}{Theorem}[section]
\newtheorem{definition}[theorem]{Definition}

\newcommand{\SU}{\operatorname{SU}}
\newcommand{\su}{\mathfrak{su}}

\begin{document}

\title{Solving graph problems using permutation-invariant quantum machine learning
}

\author{\IEEEauthorblockN{Maximilian Balthasar Mansky\IEEEauthorrefmark{1}\IEEEauthorrefmark{2}, Tobias Rohe\IEEEauthorrefmark{1}, Gerhard Stenzel \IEEEauthorrefmark{1}, Alejandro Bravo de la Serna\IEEEauthorrefmark{3},\\ Santiago Londoño Castillo\IEEEauthorrefmark{1}, Gautham Sathish\IEEEauthorrefmark{1}, Dimitra Nikolaidou\IEEEauthorrefmark{1}, Dmytro Bondarenko\IEEEauthorrefmark{1},\\ Linus Menzel\IEEEauthorrefmark{1} and Claudia Linnhoff-Popien\IEEEauthorrefmark{1}}
\IEEEauthorblockA{\IEEEauthorrefmark{1}Institute of Informatics, LMU Munich}
\IEEEauthorblockA{\IEEEauthorrefmark{3}TUM Munich}
\IEEEauthorblockA{\IEEEauthorrefmark{2}
Email: maximilian-balthasar.mansky@ifi.lmu.de}}

\maketitle

\begin{abstract}
Many computational problems are unchanged under some symmetry operation. In classical machine learning, this can be reflected with the layer structure of the neural network. In quantum machine learning, the ansatz can be tuned to correspond to the specific symmetry of the problem. We investigate this adaption of the quantum circuit to the problem symmetry on graph classification problems. On random graphs, the quantum machine learning ansatz classifies whether a given random graph is connected, bipartite, contains a Hamiltonian path or cycle, respectively. We find that if the quantum circuit reflects the inherent symmetry of the problem, it vastly outperforms the standard, unsymmetrized ansatzes. Even when the symmetry is only approximative, there is still a significant performance gain over non-symmetrized ansatzes. We show how the symmetry can be included in the quantum circuit in a straightforward constructive method.
\end{abstract}

\begin{IEEEkeywords}
    quantum machine learning, symmetry, discrete symmetry, optimization, graph problems, connectedness, bipartite, Hamiltonian path, Hamiltonian cycle
\end{IEEEkeywords}

\section{Introduction}

Quantum computing as a new computational paradigm has seen significant interest over the last years. With proven speed-ups over classical algorithms \cite{shorAlgorithmsQuantumComputation1994, groverFastQuantumMechanical1996, deutsch_rapid_1992}, there is an expectation that quantum computing will outperform classical computing. The last years have not proven fruitful for quantum algorithm research, with the field instead pivoting to quantum machine learning \cite{schuldIntroductionQuantumMachine2015, biamonteQuantumMachineLearning2017}. In this approach, quantum circuits are built with parameterized gates that can be changed to approximate a desired function.

Quantum machine learning, in contrast to classical machine learning, offers a different set of operations for approximating the desired function. On a high level, classical machine learning works using linear maps subject to non-linear functions – the activation function. The functions are nested inside each other in structures called layers. The exact details are up to the chosen architecture, as there is considerable freedom in choosing the number of layers, activation functions, dimensions and considerations about embedding. In the quantum case, the quantum circuit environment restricts the machine learning approach. The parameters are generally integrated into one-qubit gates, with non-parameterized two-qubit gates weaving a net between the qubits. Rather than a nested approach, quantum circuits are concatenated in a multiplicative structure governed by the underlying mathematics of quantum systems \cite{nielsenQuantumComputationQuantum2010, hallLieGroupsLie2013}. 

This difference can be exploited to construct quantum machine learning circuits, so-called ansatzes, for specific purposes. In classical machine learning, this can be achieved by tuning the entire architecture or the inclusion of particular layers. For example, convolutional neural networks \cite{LeCunbackpropagationAppliedHandwritten1989} are specifically designed to incorporate a particular structure of the problem, namely a translational invariance, into the architecture. Newer advances such as attention mechanisms \cite{vaswani_attention_2017} extend this idea and let the machine learning model choose its own symmetry representation.

Quantum machine learning is not quite at this point yet and the incorporation of symmetries that reflect the problem is an open problem \cite{mansky_permutation-invariant_2023, laroccaGroupInvariantQuantumMachine2022, meyerExploitingSymmetryVariational2023}. Due to the different structure of quantum machine learning, different paradigms beyond the classical approaches are possible. In the classical case, the \emph{structure} of a network can be adjusted to reflect symmetries of the problem. This is exhibited in convolutional neural networks \cite{dhillon_convolutional_2020}, graph neural networks \cite{gori_new_2005} and, to some extend although not as formal, in transformer based networks \cite{vaswani_attention_2017, bronsteinGeometricDeepLearning2021}. The specific symmetry of the problem is not reflected, only the broad structure. 

In contrast, quantum machine learning with its multiplicative approach to layers enables quantum circuit constructions to directly conform to the symmetry of a \emph{specific} problem. If sufficient information is present from the natural processes that the system describes or from methodological investigation of the problem, a strict symmetry can be inferred. This symmetry can then be reflected in the quantum circuit exactly with the correct construction. This difference in approach enables are clearer reflection of the problem structure in the quantum circuit ansatz and therefore a better performance for a quantum machine learning model.

Some problems contain a stronger notion of symmetry than others. In order to see a difference, it is meaningful to look at strong symmetries. Problems on graphs naturally have a high degree of symmetry. A global property of a graph, such as connectivity, is unchanged under an isomorphism of a graph. That is, if the nodes of a graph are rearranged, the graph stays connected \cite{grossHandbookGraphTheory2004}.

\section{Related work}

Graph-based problems represent a cornerstone of combinatorial optimization, underpinning critical processes across a multitude of industrial and scientific sectors. Applications span logistics and transportation, network design (communication, energy), resource allocation, financial modeling, manufacturing processes, and even computational biology. The core challenge often lies in finding optimal or near-optimal solutions (e.g., the shortest path, the minimum cost configuration) within complex graph structures. However, many fundamental graph problems, such as the Traveling Salesman Problem (TSP) \cite{applegate_traveling_2006} and the Vehicle Routing Problem (VRP) \cite{kumar2012survey, zhang2022review, zhang_variational_2024}, belong to the NP-hard complexity class. This intrinsic computational difficulty means that finding exact solutions becomes infeasible for instances of practical size, driving extensive research into heuristic methods, approximation algorithms, and novel computational paradigms \cite{kumar2012survey, zhang_variational_2024}.

The field is constantly evolving, exploring both classical and quantum approaches. On the classical front, Graph Neural Networks (GNNs) have emerged as a powerful tool for machine learning on graph-structured data \cite{gkarmpounis2024survey}. GNNs operate by iteratively aggregating information from a node's local neighborhood, allowing them to learn representations that capture both node features and the graph's topology. This capability makes them suitable for diverse tasks like node classification, link prediction, and graph classification, finding applications in social network analysis, recommendation systems, molecular modeling, and increasingly, in tackling combinatorial optimization problems directly or as part of hybrid solvers. Despite their success, real-world deployment of GNNs faces challenges related to data imbalance, noise, privacy concerns, and out-of-distribution generalization \cite{ju2024survey}.

Given the limitations of classical methods for large-scale NP-hard problems, quantum computing offers a promising alternative avenue. The area of graph problems solved on quantum computers is growing, with significant focus on routing and optimization \cite{osaba_systematic_2022}. The TSP, in particular, has been a long-standing benchmark for quantum algorithms \cite{moser_quantum_2003, martonak_quantum_2004, dorn_quantum_2007, rohe_problem_2025}. Industrial applications related to vehicle routing \cite{fukasawaRobustBranchCutPriceCapacitated2006, feldHybridSolutionMethod2019, zhang2022review}, general path optimization \cite{finzgarQUARKFrameworkQuantum2022, cooper_exploring_2022}, and optimal control \cite{castaldo_quantum_2021} fuel this interest. Near-term quantum approaches often rely on quantum annealing or Variational Quantum Algorithms (VQAs) \cite{perezramirez2024variationalquantumalgorithmscombinatorial}. VQAs, such as the Quantum Approximate Optimization Algorithm (QAOA) \cite{herrman_impact_2021, choi_tutorial_2019, streif_comparison_2019}, employ parameterized quantum circuits (ansatzes) whose parameters are optimized classically to approximate the solution. These methods, often implemented in hybrid quantum-classical schemes \cite{ajagekar2020quantum}, are considered suitable candidates for leveraging current noisy intermediate-scale quantum (NISQ) hardware. A crucial step in all these quantum approaches involves mapping the discrete graph structure \cite{grossHandbookGraphTheory2004} onto the quantum computer's qubits, often using representations like graph states, which are also relevant in quantum error correction \cite{gottesmanStabilizerCodesQuantum1997}.

A key characteristic of many graph problems is their inherent symmetry. Global graph properties, such as connectivity, bipartiteness, or the existence of a Hamiltonian cycle (as investigated in this work), are fundamentally invariant under permutations of the node labels \cite{grossHandbookGraphTheory2004}. That is, relabeling the nodes does not change whether the graph possesses the property. This intrinsic permutation symmetry presents an opportunity for designing more tailored and potentially more efficient quantum algorithms. The field of group-invariant quantum computing \cite{laroccaGroupInvariantQuantumMachine2022, meyerExploitingSymmetryVariational2023, nguyenTheoryEquivariantQuantum2024, mansky_permutation-invariant_2023} directly addresses this by constructing quantum circuits (ansatzes) that inherently respect specific symmetries of the problem. The computation is restricted to a subspace consistent with the symmetry, which can be continuous \cite{meyerExploitingSymmetryVariational2023, holmesConnectingAnsatzExpressibility2022, laroccaGroupInvariantQuantumMachine2022} or discrete \cite{mansky_permutation-invariant_2023, mansky_scaling_2024, laroccaGroupInvariantQuantumMachine2022}. By embedding the symmetry directly into the circuit structure, the goal is to reduce the effective search space for the variational parameters, potentially leading to faster convergence, better performance on symmetric problem instances, or improved resource efficiency compared to generic ansatzes \cite{orusTensorNetworksComplex2019}. This approach relies on deep mathematical connections between group theory and quantum circuit design \cite{mansky_scaling_2024, mansky_symmetry-restricted_2024, nguyenTheoryEquivariantQuantum2024}. While explicitly identifying and utilizing the correct symmetry is not always straightforward, the concept itself is well-established in physics and other scientific domains \cite{landauClassicalTheoryFields1975, landauMechanicsVol11976, landauQuantumMechanicsNonRelativistic1977}. Our work specifically investigates the impact of enforcing permutation invariance, a ubiquitous symmetry in graph problems, within a quantum machine learning framework, analyzing its effect on solving graph classification tasks.

\section{Mathematical framework}

The treatment of permutation-invariant quantum circuit requires an understanding of Lie groups and algebras. We review some of that material in this section. For an in depth explanation, consider \cite{hall_compact_2015, hallLieGroupsLie2013, helgasonDifferentialGeometryLie1979} for a mathematical treatment and \cite{bronsteinGeometricDeepLearning2021} for a perspective motivated by (classical) machine learning. The gist is that quantum circuits form a continuous Lie group. This allows parameterized quantum circuits to exist. An algebra is a tangent space to the quantum circuit. It is an additive matrix space that can be mapped to the quantum circuit. Permutation invariance can be achieved by appropriately summing over a basis in the algebra, somewhat akin to a coordinate transformation. The invariant basis can be mapped back to the quantum circuits.

\subsection{Notation}

We generally follow the standard notation of the field. There are a few short-hand notations that need a bit more explanation. These are listed below.

\paragraph{Arbitrary Pauli matrices}

In the context of Pauli strings, an unmarked Pauli matrix $\sigma$ is assumed to be one of the three Pauli matrices and identity, $\sigma \in \{\sigma_x, \sigma_y, \sigma_z, \mathbbm{1}\}$. This allows for the short notation for a $n$-member Pauli string, $P = \bigotimes_{i=1}^n  \sigma_i$ rather than the more cumbersome $P= \bigotimes_{i=1}^n \sigma_{i, j},\,j\in\{x, y, z, \mathbbm{1}\}$.

\paragraph{Sum over permutations}

In the later sections, there are sums over permutations of Pauli strings, with permutations depending on some external group. A permutation is denoted by $\pi_k$ with $k$ implicitly iterating over all permutations in the group. The short notation $\sum_k \pi_k P = \sum_k \bigotimes_{i=1}^n \pi_k \sigma_i$ means the sum over all permutations in the relevant symmetry group on the Pauli strings. This is equivalent to $\sum_{\pi\in S} \pi P = \sum_{\pi\in S}\pi \bigotimes_{i=1}^n \sigma_i$. This allows to discuss the action of several symmetry groups at the same time.

\paragraph{Restricted subgroups}

A key element in this thesis is the symmetry-restricted subgroup $\mathcal{M}_{\text{restriction}}\SU$. The restriction is noted as a subscript to the letter $\mathcal{M}$, where a single restriction is discussed. A subscript-less $\mathcal{M}\SU$ generally denotes an arbitrary restricted subgroup.

We provide a short overview over the relevant mathematical framework necessary for our work. For a more general introduction, see \cite{helgasonDifferentialGeometryLie1979, nielsenQuantumComputationQuantum2010, hallLieGroupsLie2013}.

\subsection{Lie group}

Lie groups are continuous groups where an additional topological structure can be identified with the group. The elements of the group form a closed manifold. The topology of the manifold then allows for deeper analysis of the group. 

\begin{definition}[Lie groups] A Lie group is a group that is also a smooth manifold. The group operations of composition and inversion are smooth maps on that manifold. 
\end{definition}

A manifold in itself can be understood as a generalization of a surface. The manifold provides the group with additional structure. In particular, notions of distance as well as differentiability are now present. Both are necessary ingredients for the (quantum) machine learning parts presented later in this work. 

Specific to this paper are the unitary group $\operatorname{U}$ and the special unitary group $\SU$.

\begin{definition}[Unitary group]
The unitary group $\operatorname{U}$ is the group of unitary matrices, i.e. those matrices $U$ whose absolute value of the determinant is 1.
\begin{equation}
\operatorname{U} = \big\{U \in \operatorname{M}_n(\mathbb{C}\big)|\, |\det(U)| = 1\big\}
\end{equation}
\end{definition}

For a one-dimensional unitary group $U(1)$, the  topological structure is that of a circle $S_1$. The elements of $U(1)$ correspond to the angles on the circle. In quantum computing, $U(1)$ is also the space of the global phase present in algorithms and often ignored ("up to global phase").

\begin{definition}[Special unitary group]
The special unitary group $\SU$ consists of all matrices whose determinant is one,
\begin{equation}
\SU = \big\{ U \in \operatorname{M}_n(\mathbb{C})\big| \det(U) = 1\big\}
\end{equation}
\end{definition}

\subsection{Lie algebra}

A tangent space $T_XM$ at a point $X$ to the manifold $M$ is defined as a vector space that is locally flat at the point $X$. In contains all derivatives of the manifold at the point $X$. 

\begin{definition}[Tangent space]
Consider the smooth differentiable manifold $M$. It contains smooth differentiable curves $\gamma_i$ on the manifold that intersect at the point $X$. The local gradients at the point $X$, $\partial_i|_{X} \gamma_i$ span a vector space tangential to the manifold, such that the tangent space $T_XM$ at the point $X$ can be defined as $T_XM = \operatorname{span}(\partial_i|_X \gamma_i)$.
\end{definition}

Alternative equivalent definitions exist. The intuitive picture is to consider the manifold embedded in some higher-dimensional space and the tangent space as a local plane touching the point $X$.

\begin{definition}[Lie algebra]
A Lie algebra is a vector space $\mathfrak{g}$ over a field with a commutator operation $[\mathfrak{g}, \mathfrak{g}] \to \mathfrak{g}$ satisfying:
\begin{enumerate}
\item Bilinearity, such that $[a_ix_i + a_jx_j, x_k] = a_i[x_i, x_k] + a_j[x_j, x_k]$ and $[x_1, a_ix_j + a_jx_k] = a_i[x_i, x_j] + a_j[x_i, x_k]$ for scalars $a_i, a_j$ and $x_i, x_j, x_k\in \mathfrak{g}$ elements of the Lie algebra
\item Anticommutativity, such that $[x_i, x_j] = -[x_j, x_i]$. This also implies $[x, x] = 0$.
\item Jacobi identity, $[x_i, [x_j, x_k]] + [x_j, [x_k, x_i]] + [x_k, [x_i, x_j]] = 0$ for all $x_i, x_j, x_k \in \mathfrak{g}$ elements of the Lie algebra.
\end{enumerate}
\end{definition}

The commutator can be expressed in the form for associative algebras,
\begin{equation}
[x_i, x_j] = x_ix_j - x_jx_i
\end{equation}

The commutator can be interpreted as the (infinitesimal) difference between two paths. For two non-abelian elements, it makes a difference in which order they are applied.

\subsection{Pauli group}

The Pauli matrices are a common basis used for the description of the dynamics of quantum mechanical systems. 

\begin{definition}[Pauli matrices]
The Pauli matrices $\{\sigma_x, \sigma_y, \sigma_z\}$ are defined as:
\begin{equation}
\sigma_x = \begin{pmatrix}
0 & 1\\
1 & 0\end{pmatrix}, \quad
\sigma_y = \begin{pmatrix}
0 & -i\\
i & 0 \end{pmatrix}, \quad
\sigma_z = \begin{pmatrix}
1 & 0\\
0 & -1\end{pmatrix}
\end{equation}
\end{definition}

The Pauli matrices, together with the identity $\mathbbm{1} = \left(\begin{smallmatrix} 1 & 0\\ 0 & 1\end{smallmatrix}\right)$, have some useful properties.
\begin{enumerate}
\item The Pauli matrices are self-inverse, such that $\sigma_x^2 = \sigma_y^2 = \sigma_z^2 = \mathbbm{1}$
\item They follow the commutation rule $[\sigma_j, \sigma_k] = 2i\epsilon_{jkl} \sigma_l$, where $\epsilon_{ijk}$ is the Levi-Civita symbol or totally antisymmetric tensor.\footnote{This means that the commutator is $0$ for $\sigma_j = \sigma_k$ and returns the third matrix if two different matrices appear in the commutator.}
\item The imaginary Pauli matrices $i \sigma_i$ form a basis on $\su(2)$ with real parameters and generate the group $\SU(2)$ with exponentiation.
\end{enumerate}

These properties can be extended to higher dimensions of the special unitary algebra. It is useful to define the Pauli group first.

\begin{definition}[Pauli group]
The Pauli group consists of the $n$-ary tensor product of Pauli matrices with the identity. The Pauli group is closed under multiplication with the imaginary unit.
\begin{equation}
\Pi_n = \bigotimes_{i=1}^n\{\pm, \pm i\}\times \{\sigma_x, \sigma_y, \sigma_z,\mathbbm{1}\}
\end{equation}
\end{definition}
This group is discrete and finite. An individual element within the group is called a Pauli string. There exists a shorthand notation of writing the relevant axis for each matrix, leading to string-like objects. Two Pauli strings can be multiplied together by multiplying their individual elements:
\begin{equation}
P_\alpha \cdot P_\beta = (\sigma_{\alpha 1} \cdot \sigma_{\beta 1}) \otimes (\sigma_{\alpha 2} \cdot \sigma_{\beta 2}) \otimes \ldots \otimes (\sigma_{\alpha n} \cdot \sigma_{\beta n}), \quad P_\alpha, P_\beta \in \Pi_n
\end{equation}
This can be turned into a basis for $\su(2^n)$ by linearizing the Pauli group $\Pi_n$:
\begin{equation}
\su(2^n) = \operatorname{span} \left( \Pi_n \right)
\end{equation}
This basis is called the Pauli basis. The Pauli strings form generators in the algebra that can then be mapped to quantum circuits.

It is particularly useful for quantum computing because each entry in a Pauli string under the exponential map corresponds to an action on a particular qubit. Consider a Pauli string with a single Pauli matrix and identities everywhere else:

\begin{equation}
P_{X_i} = \mathbbm{1} \otimes \mathbbm{1} \otimes \ldots \otimes \sigma_x \otimes \mathbbm{1} \otimes \ldots \otimes \mathbbm{1}
\end{equation}
Its map to the group $\SU(2^n)$ then leads to an action on the $i$th qubit. 


Similarly, a Pauli string with two non-identity entries,
\begin{equation}
P_{X_i Y_j} = \underbrace{\mathbbm{1} \otimes \mathbbm{1} \otimes \ldots \mathbbm{1}}_{i-1 \text{ times}} \otimes \sigma_x \otimes \underbrace{\mathbbm{1} \otimes \ldots \otimes \mathbbm{1}}_{j - i - 1 \text{ times}} \otimes \sigma_y \otimes \underbrace{\mathbbm{1} \otimes \ldots \otimes \mathbbm{1}}_{n - j \text{ times}}
\end{equation}
will act as a two-qubit gate on the qubits at positions $i$ and $j$. 

\section{Permutation invariance}

The construction presented in the previous section can be used on different kinds of natural discrete symmetries \cite{mansky_permutation-invariant_2023, mansky_scaling_2024}. It allows for the direct generation of quantum circuits restricted by the discrete symmetry. The process is as follows:
\begin{enumerate}
    \item Choose a set of generators that are interchangeable, such that their order does not matter
    \item Apply the symmetrization
    \item Exponentiate the elements to quantum circuits
\end{enumerate}

\begin{theorem}[Maximally abelian subtorus]
If the commutator of two Pauli strings $[\sigma_i, \sigma_j] =0$, then the exponential of the sum is the same as the product of the individual exponents, $\exp(\sigma_i + \sigma_j) = \exp(\sigma_i) \exp(\sigma_j)$.
\end{theorem}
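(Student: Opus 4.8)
The plan is to reduce the claim to the standard Banach-algebra fact that commuting elements obey the additive exponential law, taking care to pinpoint exactly where the hypothesis $[\sigma_i,\sigma_j]=0$ is used. Throughout I would work with the defining power series $\exp(A)=\sum_{k\ge 0}A^k/k!$, which for a finite-dimensional matrix — in particular a Pauli string — converges absolutely with respect to any submultiplicative norm, since $\sum_k\|A\|^k/k!=e^{\|A\|}<\infty$; this absolute convergence is what licenses the rearrangements below.

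First I would establish the binomial expansion for commuting summands: because $\sigma_i\sigma_j=\sigma_j\sigma_i$, a one-line induction on $k$ gives $(\sigma_i+\sigma_j)^k=\sum_{m=0}^k\binom{k}{m}\sigma_i^m\sigma_j^{k-m}$. This is the \emph{only} place the hypothesis enters — without commutativity the cross terms cannot be collected into binomial coefficients. For Pauli strings the hypothesis is easy to verify in practice: the product of two Pauli strings is a fixed scalar multiple of a Pauli string, and $[\sigma_i,\sigma_j]=0$ says precisely that this scalar is the same in both orders; but this structural remark is not needed for the proof, the matrix-level identity suffices.

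Next I would substitute into the series for $\exp(\sigma_i+\sigma_j)$ and carry out the Cauchy product rearrangement,
\[
\exp(\sigma_i+\sigma_j)=\sum_{k\ge 0}\frac{1}{k!}\sum_{m=0}^k\binom{k}{m}\sigma_i^m\sigma_j^{k-m}=\sum_{k\ge 0}\sum_{m=0}^k\frac{\sigma_i^m}{m!}\,\frac{\sigma_j^{k-m}}{(k-m)!},
\]
and then reindex by the independent indices $m$ and $l=k-m$ ranging over $\mathbb{Z}_{\ge 0}$, obtaining $\big(\sum_{m\ge 0}\sigma_i^m/m!\big)\big(\sum_{l\ge 0}\sigma_j^l/l!\big)=\exp(\sigma_i)\exp(\sigma_j)$. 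The iterated version relevant to a maximally abelian set — $\exp(\sum_r\sigma_r)=\prod_r\exp(\sigma_r)$ for pairwise-commuting strings — then follows by induction on the number of generators, using that a sum of commuting strings again commutes with each remaining string.

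The main obstacle is not any single computation but the justification of the double-series rearrangement: one must invoke the Cauchy product theorem in its Banach-algebra form, which is exactly why I would foreground the absolute convergence of $\exp$ at the outset. If one prefers to avoid series manipulation entirely, an equivalent route is an ODE argument: set $g(t)=\exp(t\sigma_i)\exp(t\sigma_j)$, note that $[\sigma_i,\sigma_j]=0$ implies $\sigma_j$ commutes with $\exp(t\sigma_i)$, so $g'(t)=\sigma_i g(t)+\exp(t\sigma_i)\sigma_j\exp(t\sigma_j)=(\sigma_i+\sigma_j)g(t)$ with $g(0)=\mathbbm{1}$; by uniqueness for this linear matrix ODE, $g(t)=\exp\!\big(t(\sigma_i+\sigma_j)\big)$, and setting $t=1$ concludes. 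I would present the series proof as primary and mention the ODE argument as a remark.
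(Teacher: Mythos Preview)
Your proof is correct, but it takes a different route from the paper's. The paper simply invokes the Baker--Campbell--Hausdorff formula
\[
\exp(x)\exp(y)=\exp\Big(x+y+\tfrac12[x,y]+\tfrac{1}{12}[x,[x,y]]-\tfrac{1}{12}[y,[x,y]]+\cdots\Big)
\]
and observes that when $[x,y]=0$ every higher term vanishes, leaving $\exp(x)\exp(y)=\exp(x+y)$. Your approach instead works directly from the power-series definition of $\exp$, using the binomial expansion for commuting matrices and a Cauchy-product rearrangement, with an ODE-uniqueness argument offered as an alternative. The trade-off is the usual one: the paper's argument is a one-liner but rests on the BCH formula as a black box (itself a nontrivial result whose full proof is considerably harder than the statement being established here), whereas your series argument is longer but entirely self-contained and makes explicit exactly where commutativity enters. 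For a result this elementary, your direct approach is arguably cleaner; the BCH route is more natural if one is already working in a Lie-theoretic framework where BCH is taken as known, as the paper is.
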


\begin{proof}
This property follows from the Baker-Campbell-Hausdorff (BCH) formula, stating that
\begin{equation}
\begin{aligned}
&\exp(z) = \exp(x) \exp(y)\\
&= \exp\left(x + y + \frac12 [x, y] + \frac1{12} [x, [x, y]] - \frac1{12}[y, [x, y]] + \ldots\right) 
\end{aligned}
\end{equation}
If the commutator is zero, meaning that the elements are abelian, they can be exponentiated individually.
\end{proof}

The property is given if the Pauli string only contains two of the three Pauli matrices,
\begin{equation}
\begin{aligned}
\sum_k \bigotimes_i^n \pi_k \sigma_i &\in \text{ maximally abelian subtorus } \\
&\Leftrightarrow \sigma_i \in \{\sigma_x, \sigma_y\}, \{\sigma_x, \sigma_z\}, \{\sigma_y, \sigma_z\}
\end{aligned}
\end{equation}

If the elements are part of the subtorus, then there exists a straightforward way to construct the quantum circuits. All the individual elements commute in the quantum circuit, such that the elements can be exponentiated individually onto quantum circuits and then concatenated together, as
\begin{equation}
\exp\left( \sum_k \bigotimes_i^n \pi_k \sigma_i\right) = \prod_k \exp\left( \bigotimes_i^n \pi_k \sigma_i \right)\label{eq:exponentiation-rule}
\end{equation}

\section{Quantum circuits}

A popular choice is the following: \cite{laroccaGroupInvariantQuantumMachine2022, ragoneUnifiedTheoryBarren2024,  schatzkiTheoreticalGuaranteesPermutationequivariant2024}

\begin{subequations}
\begin{align}
\Pi_k X(\theta) &= \Pi_k \exp\left(-\frac{i\pi\theta}2 \pi_k \sigma_x \otimes \mathbbm{1}^{n-1}\right)\label{eq:pi-x}\\
\Pi_k Y(\theta) &= \Pi_k \exp\left(-\frac{i\pi\theta}2 \pi_k \sigma_y \otimes \mathbbm{1}^{n-1}\right)\label{eq:pi-y}\\
\Pi_k ZZ(\theta) &= \Pi_k \exp\left(-\frac{i\pi\theta}2 \pi_k \sigma_z \otimes \sigma_z \otimes \mathbbm{1}^{n-2}\right)\label{eq:pi-zz}
\end{align}\end{subequations}

\begin{table*}
\caption[Quantum circuit building blocks]{The quantum circuits building blocks used in the quantum machine learning analysis. Each quantum circuit diagram shows one layer of the quantum circuit. This layer is repeated until the desired number of parameters is achieved. Gates with the same color contain shared parameters. $U$ gates indicate gates of the form $X(\theta_i) Y(\theta_{i+1}) Z(\theta_{i+2})$ that cover the whole $\SU(2)$ sphere. }\label{tab:circuits}
\begin{tabular}{r | l}
Type & Quantum circuit\\\hline
Permutation invariant & \begin{tikzpicture}[gate/.style={rectangle, draw=black, fill=white, inner sep=3pt}, xscale=.7, yscale=.55, baseline=(current bounding box.center)]
\foreach \i in {1,2,...,6} {
\draw (0.5, \i) --+ (17, 0);
\draw (1, \i) node[gate, fill=red!20] {X}
	(2, \i) node[gate, fill=orange!20] {Y};
	\foreach \j in {3, 4, ..., 8} {
	\node (p\j\i) at (\j, \i) {\phantom{$ZZ$}};};};
\foreach \i in {1,2, ..., 5} {
\draw[line width=.3pt, line cap=rect, double=purple!20, double distance=14pt] (\i + 2, 1) --+ (0, \i);
\draw (\i + 2, 1) node {ZZ};};

\foreach \i in {1,2, ..., 4} {
\draw[line width=.3pt, line cap=rect, double=purple!20, double distance=14pt] (\i + 7, 2) --+ (0, \i);
\draw (\i + 7, 2) node {ZZ};
\draw (\i + 2.5, \i + 1) --+ (5 - \i, 0);};

\foreach \i in {1,2, ..., 3} {
\draw[line width=.3pt, line cap=rect, double=purple!20, double distance=14pt] (\i + 11, 3) --+ (0, \i);
\draw (\i + 11, 3) node {ZZ};
\draw (\i + 7.5, \i + 2) --+ (4 - \i, 0);};

\foreach \i in {1,2} {
\draw[line width=.3pt, line cap=rect, double=purple!20, double distance=14pt] (\i + 14, 4) --+ (0, \i);
\draw (\i + 14, 4) node {ZZ};
\draw (\i + 11.5, \i + 3) --+ (3 - \i, 0);};

\foreach \i in {1} {
\draw[line width=.3pt, line cap=rect, double=purple!20, double distance=14pt] (\i + 16, 5) --+ (0, \i);
\draw (\i + 16, 5) node {ZZ};
\draw (\i + 14.5, \i + 4) --+ (2 - \i, 0);};

\end{tikzpicture}\\\hline
%
Cyclic invariant & \begin{tikzpicture}[gate/.style={rectangle, draw=black, fill=white, inner sep=3pt}, xscale=.7, yscale=.55, baseline=(current bounding box.center)]
\foreach \i in {1,2,...,6} {
\draw (0.5, \i) --+ (14.5, 0);
\draw (1, \i) node[gate, fill=red!20] {X}
	(2, \i) node[gate, fill=orange!20] {Y};
	};
\foreach \i in {1,2, ..., 5} {
\draw[line width=.3pt, line cap=rect, double=purple!20, double distance=14pt] (\i + 2, \i) --+ (0, 1);
\draw (\i + 2, \i) node {ZZ};};
\draw[line width=.3pt, line cap=rect, double=purple!20, double distance=14pt] (8, 1) --+ (0, 5);
\draw (8,1) node {ZZ};
\foreach \i in {2, 3, ..., 5} {
\draw (7.5, \i) --+(1,0);};
\foreach \i in {1,2, ..., 4} {
\draw[line width=.3pt, line cap=rect, double=violet!20, double distance=14pt] (\i + 8.5, \i) --+ (0, 2);
\draw (\i + 8, \i+1) --+ (1,0);
\draw (\i + 8.5, \i) node {ZZ};};
\foreach \i in {1,2} {
\draw[line width=.3pt, line cap=rect, double=violet!20, double distance=14pt] (12.5 + \i, \i) --+ (0, 4);
	\foreach \j in {1, 2, 3} {
	\draw (12 + \i, \i + \j) --+ (1,0);};
\draw (12.5 + \i,\i) node {ZZ};};

\end{tikzpicture}\\\hline
Free parameters & \begin{tikzpicture}[gate/.style={rectangle, draw=black, fill=white, inner sep=3pt}, xscale=.7, yscale=.55, baseline=(current bounding box.center)]
\foreach \i/\col in {1/0, 2/20, 3/40, 4/60, 5/80, 6/100} {
\draw (0.5, \i) --+ (17, 0);
\draw (1, \i) node[gate, fill=red!\col!violet!20] {X}
	(2, \i) node[gate, fill=orange!\col!purple!20] {Y};
	\foreach \j in {3, 4, ..., 8} {
	\node (p\j\i) at (\j, \i) {\phantom{$ZZ$}};};};
\foreach \i/\col in {1/0,2/5, 3/10, 4/15, 5/20} {
\definecolor{mycolor}[rgb]{HSB}{\col, 89, 255}
\draw[line width=.3pt, line cap=rect, double=mycolor!80, double distance=14pt] (\i + 2, 1) --+ (0, \i);
\draw (\i + 2, 1) node {ZZ};};

\foreach \i/\col in {1/25,2/30, 3/35, 4/40} {
\definecolor{mycolor}[rgb]{HSB}{\col, 89, 255}
\draw[line width=.3pt, line cap=rect, double=mycolor!80, double distance=14pt] (\i + 7, 2) --+ (0, \i);
\draw (\i + 7, 2) node {ZZ};
\draw (\i + 2.5, \i + 1) --+ (5 - \i, 0);};

\foreach \i/\col in {1/45,2/50, 3/55} {
\definecolor{mycolor}[rgb]{HSB}{\col, 89, 255}
\draw[line width=.3pt, line cap=rect, double=mycolor!80, double distance=14pt] (\i + 11, 3) --+ (0, \i);
\draw (\i + 11, 3) node {ZZ};
\draw (\i + 7.5, \i + 2) --+ (4 - \i, 0);};

\foreach \i/\col in {1/60, 2/65} {
\definecolor{mycolor}[rgb]{HSB}{\col, 89, 255}
\draw[line width=.3pt, line cap=rect, double=mycolor!80, double distance=14pt] (\i + 14, 4) --+ (0, \i);
\draw (\i + 14, 4) node {ZZ};
\draw (\i + 11.5, \i + 3) --+ (3 - \i, 0);};

\foreach \i/\col in {1/70} {
\definecolor{mycolor}[rgb]{HSB}{\col, 89, 255}
\draw[line width=.3pt, line cap=rect, double=mycolor!80, double distance=14pt] (\i + 16, 5) --+ (0, \i);
\draw (\i + 16, 5) node {ZZ};
\draw (\i + 14.5, \i + 4) --+ (2 - \i, 0);};

\end{tikzpicture}\\\hline

Standard ansatz & \begin{tikzpicture}[gate/.style={rectangle, draw=black, fill=white, inner sep=3pt}, xscale=.7, yscale=.55, baseline=(current bounding box.center)]
\begin{scope}[yscale=-1, yshift=-7cm]
\foreach \i/\col in {1/0, 2/20, 3/40, 4/60, 5/80, 6/100} {
\draw (0.5, \i) --+ (14, 0);
\draw (1, \i) node[gate, fill=red!\col!violet!20] {U};
	};
\foreach \i in {5,4, ..., 1} {
\path (\i + 1, \i+1) node[circle, draw=black] (target) {} (\i + 1, \i) node[circle, fill=black, inner sep=1.3pt] (control) {};
\draw (target.south) -- (control.center);};
\path (7, 1) node[circle, draw=black] (target) {} (7, 6) node[circle, fill=black, inner sep=1.3pt] (control) {};
\draw(target.north) -- (control.center);

\begin{scope}[xshift = 7cm]
\foreach \i/\col in {1/0, 2/20, 3/40, 4/60, 5/80, 6/100} {
\draw (1, \i) node[gate, fill=orange!\col!purple!20] {U};
	};
\foreach \i in {1,2, ..., 4} {
\path (\i + 1, \i+2) node[circle, draw=black] (target) {} (\i + 1, \i) node[circle, fill=black, inner sep=1.3pt] (control) {};
\draw (target.south) -- (control.center);};
\foreach \i in {1,2} {
\path (5 + \i, \i) node[circle, draw=black] (target) {} (5 + \i, 4 + \i) node[circle, fill=black, inner sep=1.3pt] (control) {};
\draw(target.north) -- (control.center);};
\end{scope}
\end{scope}

\end{tikzpicture}\\\hline
\end{tabular}
\end{table*}

The first two gate structures correspond to the the $X$ and $Y$ rotation gates of equations~\eqref{eq:pi-x} and \eqref{eq:pi-y}, while the purple $ZZ$ layers are expressed mathematically in equation~\eqref{eq:pi-zz}. Each block only contains a single parameter that is shared by all gates. The explicit construction with the elementary gates obfuscates the fact this is really just one mathematical element. More compact quantum circuit representations may exist \cite{nguyenTheoryEquivariantQuantum2024}.

We use different symmetries in our quantum circuits. The strongest symmetry available on a qubit level is the permutation symmetry $S_n$ of all possible permutations $\pi_k$. Each two-qubit gate needs to connect from each qubit to any other qubit in this unoptimized setting. The next weaker regular symmetry is the cyclic symmetry $C_n$. This symmetry is characterized by cyclically shifting each element to the next one, such that $1\to 2, 2\to 3$ and so on. Since the ansatz structure can have a significant impact on the performance of the quantum circuit, we also include a quantum circuit with a permutation-invariant gate structure but leave all gates with independent parameters. This provides a baseline for the architecture. Lastly, we compare against one of the standard circuits of quantum machine learning, the strongly entangling layer implemented by \texttt{pennylane} \cite{bergholmPennyLaneAutomaticDifferentiation2022}. All circuits are shown in table \ref{tab:circuits}.

\section{Experiments}

The effect of symmetry restriction is most pronounced for strong global symmetries. The strongest symmetry, in terms of the number of permutation elements, is the permutation symmetry $S_n$. Likewise, the effect is seen most on problems that also contain that symmetry. Graphs problems can contain a highly symmetric structure if the question is about a global property of the graph. Examples are:

\paragraph{Connectedness} A graph $\mathcal{G}$ is \emph{connected} iff from any node $N_i$ every node $N_{j \ne i}$ can be reached via the edges $E$ \cite{grossHandbookGraphTheory2004}. This problem is permutation-invariant. The classical algorithmic complexity is linear, $\mathcal{O}(n)$.


\paragraph{Bipartiteness} A graph $\mathcal{G}$ is bipartite iff its nodes can be separated into two sets such that there are no edges within each set. Equivalently, can the nodes of the graph be colored with two colors such that no node connects to a node of the same color \cite{grossHandbookGraphTheory2004}. This problem is permutation-invariant. The classical algorithmic complexity is quadratic, $\mathcal{O}(n^2)$.

\paragraph{Hamiltonian cycle} A graph $\mathcal{G}$ contains a Hamiltonian cycle if every node $N_i$ of the graph can be reached in an unbroken path along the edges and return to the start. That is, 
\begin{equation}
    \text{HC}\colon  \exists P = \{E_{ij}\} \text{ s.t. } P_1 = P_n \text{ and } E_{ij}^{(m)} = E_{jk}^{(m+1)}
\end{equation}
Formulated as a binary decision problem, it is permutation invariant. It belongs to the complexity class \textsf{NP-complete}.

\paragraph{Hamiltonian path} A Hamiltonian path HP in a graph $\mathcal{G}$ connects two nodes $N_1, N_n$ via an unbroken chain of edges that visit all other nodes only once. similar to the Hamiltonian cycle, except that $N_1 \ne N_n$. Mathematically,
\begin{equation}
    \text{HC}\colon  \exists P = \{E_{ij}\} \text{ s.t. } E_{ij}^{(m)} = E_{jk}^{(m+1)}
\end{equation}
As a binary decision problem, it is also permutation invariant, but the structure of the resulting dataset is different to the Hamiltonian cycle. It also belongs to the complexity class \textsf{NP-complete}.

We choose the algorithmic complexity as a stand-in for geometric data set complexity, as is the case for the two-dimensional standard toy data sets such as two moons, contained circles, or spirals. Similar to the change in complexity in toy data set with increasing geometrical structure, the graph data sets have increasingly complex structure. As an edge is added to the graph, the property can suddenly change. We test our quantum machine learning framework on this dataset with the quantum circuits of table \ref{tab:circuits}.

We repeat the layer structure of the quantum circuit to achieve approximately equal number of parameters across the different quantum circuit. The target value for the number of parameters is 120. The permutation-invariant layer contains three parameters per layer, hence there are 40 repetitions with mutually independent parameters in the final quantum circuit. For the cyclic invariant quantum circuit the number is 4 parameters per layer and hence 30 repetitions of the layer structure. The number of parameters is independent of the number of qubits addressed by that layer. In contrast, the number of parameters in the free-parameter and strongly-entangling quantum circuits depends on the number of qubits addressed by the circuit. For the eight qubits used in our experiments, there are 6 repetitions in the layers for the free parameter quantum circuit for a total of 132 parameters and three repetitions of the strongly entangling ansatz for a total of 108 parameters. The difference in the number of parameters only has a small impact on the performance of the quantum machine learning ansatz. The majority of the difference in the results can be attributed to the different parameter space and the different cover structure imposed by the quantum circuit.

The graphs themselves are provided as graph states. Each node of the graph is mapped to a qubit. The edges are implemented as graph states, in the form
\begin{equation}
    |\mathcal{G}\rangle = \bigotimes_{i, j \in E}\operatorname{CZ}_{ij}\bigotimes_i^n H |0\rangle
\end{equation}
which describes a Hadamard gate applied to each qubit, followed by CZ gates between each qubit corresponding to the edges in the graph.

\section{Results}

\begin{table*}
\caption[Validation set performance of quantum machine learning on select graph problems.]{The validation set performance for the different problem sets and ansatzes. The error bars show the 95\% interval with the different seeds used in the runs. Each epoch contains 100 training instances of graphs. All of the problems are permutation invariant, meaning that the order of the mapping to the qubits does not influence the presence of the tested property.}\label{tab:results}
\begin{tabular}{r | l}
Problem & Performance on validation set\\\hline
Connectedness & 
\begin{tikzpicture}[baseline=(current bounding box.center)]
\begin{axis}[small,
height=5cm, width=14cm,
no markers,
xlabel = {epochs},
axis x line = bottom,
axis y line = left,
ymajorgrids,
major grid style = {very thin, gray!50},
major tick style = {very thin, gray!50},
axis line style={gray},
axis line shift=2pt,
xmin = -1,
xmax = 51,
ylabel = {accuracy},
every axis y label/.style={at={(ticklabel cs:.5)},rotate = 90, anchor=center},
ymax = 1.015,
title = {Validation set, connectedness, 8 qubits},
	]
		\addplot [red, error bars/.cd, y dir = both, y explicit, error bar style={ thin, red!30}] table [x=epochs, y=Sn, col sep=comma, y error = sn-error]{data/graph-connectedness-8.csv};
	\addplot [olive, error bars/.cd, y dir=both, y explicit, error bar style={very thin, olive!30}] table [x=epochs, y=Cn, col sep=comma, y error=cn-error]{data/graph-connectedness-8.csv};
	\addplot [violet, error bars/.cd, y dir=both, y explicit, error bar style={very thin, violet!30}] table [x=epochs, y=entanglement, col sep=comma, y error=en-error]{data/graph-connectedness-8.csv};
		\addplot [blue, error bars/.cd, y dir=both, y explicit, error bar style={very thin, blue!30}] table [x=epochs, y=Sn-free, col sep=comma, y error=sn-free-error]{data/graph-connectedness-8.csv};
	\draw[red] (yticklabel cs: 0) -- (yticklabel cs: 1);
	\draw (axis cs: 50, .9) node[above left] {permutation-invariant ansatz}
		(axis cs: 50, .65) node[above left] {cyclic-invariant ansatz}
		(axis cs: 50, .50) node[above left] {standard ansatz}
		(axis cs: 50, .43) node[above left] {free parameters};

\end{axis}
\end{tikzpicture}\\\hline
%
Bipartiteness & \begin{tikzpicture}[baseline=(current bounding box.center)]
\begin{axis}[small,
height=5cm, width=14cm,
no markers,
xlabel = {epochs},
axis x line = bottom,
axis y line = left,
ymajorgrids,
major grid style = {very thin, gray!50},
major tick style = {very thin, gray!50},
axis line style={gray},
axis line shift=2pt,
xmin = -1,
xmax = 51,
ymax=.89,
ylabel = {accuracy},
every axis y label/.style={at={(ticklabel cs:.5)},rotate = 90, anchor=center},
title = {Validation set, bipartiteness, 8 qubits},
	]
	\addplot [red, thick, error bars/.cd, y dir = both, y explicit, error bar style={ thin, red!30}] table [x=epochs, y=Sn, col sep=comma, y error = sn-error]{data/Bipartite-8.csv};
	\addplot [olive, error bars/.cd, y dir=both, y explicit, error bar style={very thin, olive!30}] table [x=epochs, y=Cn, col sep=comma, y error=cn-error]{data/Bipartite-8.csv};
	\addplot [violet, error bars/.cd, y dir=both, y explicit, error bar style={very thin, violet!30}] table [x=epochs, y=entanglement, col sep=comma, y error=en-error]{data/Bipartite-8.csv};
		\addplot [blue, error bars/.cd, y dir=both, y explicit, error bar style={very thin, blue!30}] table [x=epochs, y=Sn-free, col sep=comma, y error=sn-free-error]{data/Bipartite-8.csv};

	\draw[red] (yticklabel cs: 0) -- (yticklabel cs: 1);
	\draw (axis cs: 50, .83) node[above left] {permutation-invariant ansatz}
		(axis cs: 50, .68) node[above left] {cyclic-invariant ansatz}
		(axis cs: 50, .495) node[above left] {free parameters}
		(axis cs: 50, .57) node[above left] {standard ansatz};

\end{axis}
\end{tikzpicture}\\\hline
%
Hamiltonian cycle & \begin{tikzpicture}[baseline=(current bounding box.center)]
\begin{axis}[small,
height=5cm, width=14cm,
no markers,
xlabel = {epochs},
axis x line = bottom,
axis y line = left,
ymajorgrids,
major grid style = {very thin, gray!50},
major tick style = {very thin, gray!50},
axis line style={gray},
axis line shift=2pt,
xmin = -1,
xmax = 51,
ymax=.815,
ylabel = {accuracy},
every axis y label/.style={at={(ticklabel cs:.5)},rotate = 90, anchor=center},
title = {Validation set, Hamiltonian cycle, 8 qubits},
	]
	\addplot [red, error bars/.cd, y dir = both, y explicit, error bar style={ thin, red!30}] table [x=Epoch, y=Sn_Mean, col sep=comma, y error = Sn_Error]{data/Grouped-Ham_Cycle-s_10-e_50.csv};
	\addplot [olive, error bars/.cd, y dir=both, y explicit, error bar style={very thin, olive!30}] table [x=Epoch, y=Cn_Mean, col sep=comma, y error = Cn_Error]{data/Grouped-Ham_Cycle-s_10-e_50.csv};
	\addplot [violet, error bars/.cd, y dir=both, y explicit, error bar style={very thin, violet!30}] table [x=Epoch, y=entanglement_Mean, col sep=comma, y error=entanglement_Error]{data/Grouped-Ham_Cycle-s_10-e_50.csv};
	\addplot [blue, error bars/.cd, y dir=both, y explicit, error bar style={very thin, blue!30}] table [x=Epoch, y=free_parameters_Mean, col sep=comma, y error=free_parameters_Error]{data/Grouped-Ham_Cycle-s_10-e_50.csv};

	\draw[red] (yticklabel cs: 0) -- (yticklabel cs: 1);
	\draw (axis cs: 50, .735) node[above left] {permutation-invariant ansatz}
		(axis cs: 50, .66) node[above left] {cyclic-invariant ansatz}
		(axis cs: 50, .57) node[above left] {standard ansatz}
		(axis cs: 50, .51) node[above left] {free parameters};

\end{axis}
\end{tikzpicture}\\\hline
%
Hamiltonian path & \begin{tikzpicture}[baseline=(current bounding box.center)]
\begin{axis}[small,
height=5cm, width=14cm,
no markers,
xlabel = {epochs},
axis x line = bottom,
axis y line = left,
ymajorgrids,
major grid style = {very thin, gray!50},
major tick style = {very thin, gray!50},
axis line style={gray},
axis line shift=2pt,
xmin = -1,
xmax = 51,
ymax=.815,
ylabel = {accuracy},
every axis y label/.style={at={(ticklabel cs:.5)},rotate = 90, anchor=center},
title = {Validation set, Hamiltonian path, 8 qubits},
	]
	\addplot [red, error bars/.cd, y dir = both, y explicit, error bar style={ thin, red!30}] table [x=Epoch, y=Sn_Mean, col sep=comma, y error = Sn_Error]{data/hamiltonian_path.csv};
	\addplot [olive, error bars/.cd, y dir=both, y explicit, error bar style={very thin, olive!30}] table [x=Epoch, y=Cn_Mean, col sep=comma, y error = Cn_Error]{data/hamiltonian_path.csv};
	\addplot [violet, error bars/.cd, y dir=both, y explicit, error bar style={very thin, violet!30}] table [x=Epoch, y=entanglement_Mean, col sep=comma, y error=entanglement_Error]{data/hamiltonian_path.csv};
	\addplot [blue, error bars/.cd, y dir=both, y explicit, error bar style={very thin, blue!30}] table [x=Epoch, y=free_parameters_Mean, col sep=comma, y error=free_parameters_Error]{data/hamiltonian_path.csv};

	\draw[red] (yticklabel cs: 0) -- (yticklabel cs: 1);
	\draw (axis cs: 50, .73) node[above left] {permutation-invariant ansatz}
		(axis cs: 50, .67) node[above left] {cyclic-invariant ansatz}
		(axis cs: 50, .57) node[above left] {standard ansatz}
		(axis cs: 50, .50) node[above left] {free parameters};

\end{axis}
\end{tikzpicture}\\\hline
\end{tabular}
\end{table*}

For each quantum machine learning structure, we create a balanced dataset of 3000 graphs, split into a training and a test set. The training set contains only 100 graphs per epoch, with the remaining 2900 graphs forming the validation set. We use the quantum natural gradient \cite{stokes_quantum_2020} on the classification task. The graphs are classified into two categories based on their property. We add a third category 'close to zero' to capture the output predictions that do not yet converge to either solution. No graphs are in that category.

The results are listed in table \ref{tab:results}. We note that the permutation-invariant quantum circuit converges extremely fast, often only requiring a few epochs and a very low number of graphs to converge. 
For eight qubits, there are 2147483648 distinct graphs \cite{OEIS_A095340} that are either connected or unconnected. The permutation-invariant quantum machine learning model naturally only looks at the isomorphic, unlabeled graphs, of which there are 1044 distinct ones \cite{OEIS_A000088}. The network is trained on labeled graphs, as there is not known straightforward way to input unlabeled graphs into a quantum system.

Part of the performance advantage of the quantum algorithm certainly stems from this disparity in the number of graphs that need to be differentiated. While the quantum circuit does not map to an isomorphic graph representation internally, it is, by construction, unable to differentiate between isomorphic graphs. For the cyclic invariant quantum circuit, the reduction from labeled graphs to its ability to distinguish graphs is smaller, due to the larger parameter space covered by it. 

The results show that the quantum machine learning network is able to quickly converge to the solution if its symmetry fits to the problem. In all cases, the permutation-invariant quantum circuit captures the structure of the problem well. Its accuracy is also greater than is achievable by counting edges. The number of edges in a graph is a good indicator whether a desired property is present, at least for extreme values of edge probability. In figure \ref{fig:p-connectedness}, this is shown exemplarily for the connectedness property. However, the quantum circuit accurately captures the structure beyond counting the edges.

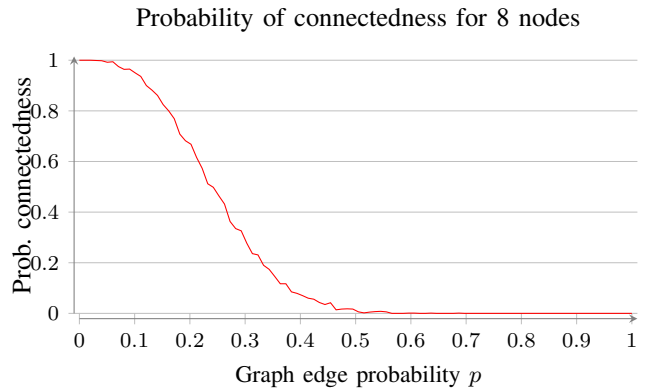
\begin{figure}[h]
    \centering
    \begin{tikzpicture}
        \begin{axis}[small,
height=5cm, width=9cm,
no markers,
xlabel = {Graph edge probability $p$},
axis x line = bottom,
axis y line = left,
ymajorgrids,
major grid style = {very thin, gray!50},
major tick style = {very thin, gray!50},
axis line style={gray},
axis line shift=2pt,
xmin = 0,
xmax = 1.01,
ylabel = {Prob. connectedness},
every axis y label/.style={at={(ticklabel cs:.5)},rotate = 90, anchor=center},
ymax = 1.015,
title = {Probability of connectedness for 8 nodes},
	]
		\addplot [red] table [x=p, y=connectedness, col sep=comma]{data/connectedness-probability.csv};
\end{axis}
    \end{tikzpicture}
    \caption{The probability of a given graph being connected against the edge probability of the graph in an Erdős–Rényi random graph model \cite{erdos_random_1959}. Based on 1000 samples at each edge probability.}
    \label{fig:p-connectedness}
\end{figure}

As the number of nodes $n$ in a graph increases, the number of edges becomes a better indicator for the likelihood of connectedness \cite{erdos_random_1959}. In particular, as $n\to\infty$, the probability of connectedness sharply approaches $1$ above an edge probability $p>\log(n)/n$ \cite{erdos_random_1959}. For the graph sizes considered here, this is not an issue.

\section{Discussion}

In this work we show that the ansatz structure can be changed to adapt to problem-relevant symmetry. Graph problems naturally have a high symmetry, since many global properties are independent of the labeling of the nodes and only depend on the arrangement of edges in the graph. To capture this property, the quantum circuit itself is designed to be permutation-invariant, returning the same result for any order of graph state inputs. This can be used in a quantum machine learning setting, since the permutation-invariant elements in the quantum circuits can be parameterized. We show that good results can be achieved, significantly better than standard ansatzes. To control for a possible difference in ansatz structure, we also provide experiments with quantum circuits that have the same structure, but a different arrangement of parameters. Instead of coupling the parameters to the same value to reach a permutation-invariant quantum circuit, the individual gates are set to be parameterized independently. The experiments show that the ansatz structure makes a negligible difference, with similar performance. Both circuits appear to slowly converge, but are clearly outperformed by a strong permutation-invariant restriction. Even a cyclic-invariant structure, which is easier to implement on quantum computers with limited connectivity, provides significantly better results. 

The experiments show that an adaption of the quantum circuit to the problem can provide significantly better results for problems with strong symmetry. The source of this advantage can be interpreted in two ways, either as the structural approximation of the problem by the ansatz through the symmetry, or as a reduction in the available computational space, which allows for a better coverage with the parameters available in the quantum circuit. From the presented results, it is not clear which of the explanations is more correct, though it is likely that a combination of both is the correct interpretation of the results shown here. Future experiments on similar ansatzes may provide an answer to this question. 

Similarly, the approach can be expanded to other problems as well that exhibit a similar structural symmetry. There are many graph problems available in computer science that admit a similarly restrictive or less restrictive symmetry. In this work we show that the quantum circuit can converge well on global properties. It remains an open questions whether the same advantage holds when a more local symmetry is considered. In some cases, it is known that there is a strong symmetry available, but finding it equates to solving the problem – An example is the problem of \emph{finding} the Hamiltonian cycle, rather than detecting it. This problem still has a cyclic symmetry along the Hamiltonian cycle as well, but finding that cyclic symmetry necessitates solving the problem first.

Similarly, the approach may be expanded to look at graph structures that contain edge weights. This breaks the obvious graph symmetry, but retains some structure in the graph that may be exploitable with the presented ansatz. One such approach could be the traveling salesman problem, which looks for the shortest complete cycle in a weighted graph. Again there is a minimum symmetry of the optimal cycle. But the edge weights themselves introduce additional structure to the graph that may be conducive to the presented permutation-invariant approach.

\section*{Acknowledgements}

MBM acknowledges funding from the German Federal Ministry of Education and Research (BMBF) under the funding program ”Förderprogramm Quantentechnologien – von den Grundlagen zum Markt” (funding program quantum technologies – from basic research to market), project BAIQO, 13N16089.

\bibliographystyle{IEEEtranS}

\end{document}